\newcolumntype{Y}{>{\centering\arraybackslash}X}
\newcommand\footnoteWithoutMarker[1]{%
  \begingroup
  \renewcommand\thefootnote{}\footnote{#1}%
  \addtocounter{footnote}{-1}%
  \endgroup
}
\newtheorem{definition}{Definition}
\newtheorem{lemma}{Lemma}
\newtheorem{claim}{Claim}
\newcommand{\negspace}{\vspace{-0.5\baselineskip}}
\title{ 
Cache Serializability: \\ 
Reducing Inconsistency in Edge Transactions
} 
\author{\rm Ittay Eyal \qquad Ken Birman \qquad Robbert van Renesse \\ \\ Cornell University} 
\begin{document} 

\maketitle

\pagestyle{empty}


\footnoteWithoutMarker{ 
Ittay Eyal; Ken Birman; Robbert van Renesse, ``Cache Serializability: Reducing Inconsistency in Edge Transactions,'' Distributed Computing Systems (ICDCS), IEEE 35th International Conference on, June~29 2015--July~2 2015. 

This work was supported, in part, by research grants from DARPA under its MRC program, and NSF under its Cloud Computing program. 
} 

\newcommand{\tCache}{\mbox{T-Cache}} 

%
%

\begin{abstract} 
Read-only caches are widely used in cloud infrastructures to reduce access latency and load on backend databases. 
Operators view coherent caches as impractical at genuinely large scale and many client-facing caches are updated in an asynchronous manner with best-effort pipelines. 
Existing solutions that support cache consistency are inapplicable to this scenario since they require a round trip to the database on every cache transaction. 

Existing incoherent cache technologies are oblivious to transactional data access, even if the backend database supports transactions. We propose \emph{\tCache}, a novel caching
policy for read-only transactions in which inconsistency is tolerable (won't cause safety violations) but undesirable (has a cost). \tCache\ improves cache consistency despite asynchronous and unreliable communication between the cache and the database. 
We define \emph{cache-serializability}, a variant of serializability that is 
suitable for incoherent caches, and prove that with unbounded resources 
\tCache\ implements this new specification.  
With limited resources, \tCache\ allows the system manager to choose a trade-off between performance and consistency. 

Our evaluation shows that \tCache\ detects many inconsistencies with only nominal overhead. 
We use synthetic workloads to demonstrate the efficacy of \tCache\ when data accesses are clustered and its adaptive reaction to workload changes. 
With workloads based on the real-world topologies, \tCache\ 
detects $43-70\%$ of the inconsistencies and increases the rate of consistent transactions by $33-58\%$. 
\end{abstract}

\negspace
    \section{Introduction} 

Internet services like online retailers and social networks store important data sets in large distributed
databases. 
Until recently, technical challenges have forced such large-system
operators to forgo transactional consistency, providing per-object
consistency instead, often with some form of eventual consistency.
In contrast, backend systems often support transactions 
with guarantees such as snapshot isolation and
even full transactional atomicity~\cite{corbett2013spanner,
balakrishnan2013tango, eyal2013ordering, escriva2013warp}. 

Our work begins with the observation that
it can be difficult for client-tier applications to
leverage the transactions that the databases provide: transactional reads
satisfied primarily from edge caches cannot guarantee coherency.  Yet, by running from cache, client-tier transactions shield the backend database
from excessive load, and because caches are typically placed close
to the clients, response latency can be improved.  The result is a tradeoff: we run transactions against incoherent caches, and each time a cache returns inconsistent data,
the end-user is potentially exposed to visibly incorrect results.

The problem centers on the asynchronous
style of communication used between the database and the geo-distributed
caches.  
A cache must minimize the frequency of backend database interactions; any approach requiring a significant rate of round-trips to the database would
incur unacceptable latency.
A cache must also respond promptly, which rules out asynchronous update policies that might
lock cache entries.
The latency from cache to database and back is often high, ruling out 
cache coherency schemes that would require the backend
database to invalidate or update cached objects.  In many settings the backend can't even track the locations at which cached objects might reside. 
Here, we define a variant of serializability called cache-serializability that is suitable for incoherent caches. 

Today, many transactional 
web applications, from social networks to online retailers, run over caches oblivious to consistency. 
These transactions won't crash if an inconsistency arises, but the end-user may be frustrated and platform revenue is subsequently reduced. 
With \tCache, the benefits of reading from an edge cache are maintained, but the frequency of these negative costs
is reduced. Our protocol is especially beneficial for workloads where data accesses are clustered, which is common in today's large-scale systems. 
\tCache\ achieves this benefit by storing dependency information with the
cached objects, allowing the cache (or the application)
to identify possible inconsistencies without contacting the database.
The user can improve the level of consistency by adjusting
the size of this dependency data: more dependency data leads to
increased consistency. 

To demonstrate the efficacy of the proposed scheme, we
created a prototype implementation and exposed it to workloads
based on graphically-structured real-world data, such as those seen
in social-networking situations.  The method detects $43-70\%$ of 
the inconsistencies and can increase the ratio of consistent transactions by $33-58\%$, 
both with low overhead.  We construct synthetic workloads and explore the behavior of
\tCache\ with different degrees of data clustering, and also investigate its rate of response when clusters change. 

With perfectly clustered workloads, \tCache\ implements full cache-serializability. 
To explain this perfect behavior we prove a related claim~--- we show that with unbounded resources 
\tCache\ implements cache-serializability. 

In summary, the contributions of this work are: 
\begin{enumerate} 

\item Definition of cache-serializability, a variant of serializability suitable for incoherent caches. 

\item The \tCache\ architecture, which allows trading off efficiency and transaction-consistency in large scale cache deployments. 

\item Evaluation of \tCache\ with synthetic workloads, demonstrating its adaptivity and sensitivity to clustering. 

\item Evaluation of \tCache\ with workloads based on graphically-structured real-world data. 

\item Proof that \tCache\ with unbounded resources implements cache-serializability. 

\end{enumerate}

    \section{Motivation} 





\paragraph{Two-tier structure} 
Large Internet services store vast amounts of data. Online retailers
such as Amazon and eBay maintain product stocks and information,
and social networking sites such as Facebook and Twitter maintain
graphical databases representing user relations and group structures.
For throughput, durability, and availability, such databases are
sharded and replicated.

The vast majority of accesses are read-only (e.g., Facebook reports
a~99.8\% read rate~\cite{bronson2013tao}).  To  reduce
database load and to reduce access latency, these companies employ
a two-tier structure, placing layers of cache servers in front of the database
(see Figure~\ref{fig:dbAndCache}).

The caches of primary interest to us are typically situated far from the backend
database systems~--- to reduce
latency, companies place caches close
to clients.  Timeouts are used to ensure that stale cached objects
will eventually be flushed, but to achieve a high cache hit ratio,
timeout values are generally large.  To obtain reasonable consistency,
the database sends an asynchronous stream of invalidation records or
cache updates, often using protocols optimized for throughput and freshness
and lacking absolute guarantees of order or reliability.

It is difficult to make this invalidation mechanism reliable without
hampering database efficiency.  First, the
databases themselves are large, residing on many servers, and may be geo-replicated.
Databases supporting cloud applications are often pipelined, using locks prudently in order to maximize concurrency
but ensuring that there is always other work to do; the result is that some updates complete rapidly but others
can exhibit surprisingly high latency.  During these delays, read access must go forward.  
Databases cannot always accurately track the caches that hold a
copy of each object, because the context makes it very hard to send timely invalidations: 
they could be delayed (e.g., due to buffering or retransmissions
after message loss), not sent (e.g., due to an inaccurate list of
locations), or even lost (e.g., due to a system configuration change, 
buffer saturation, or because
of races between reads, updates, and invalidations).
A missing invalidation obviously leaves the corresponding cache entry stale.
Pitfalls of such invalidation schemes are described in detail by Nishita et
al.~\cite{nishtala2013scaling} and by Bronson et al.~\cite{bronson2013tao}.

\begin{figure}[!t] 
\centering
\includegraphics[width=0.8\linewidth]{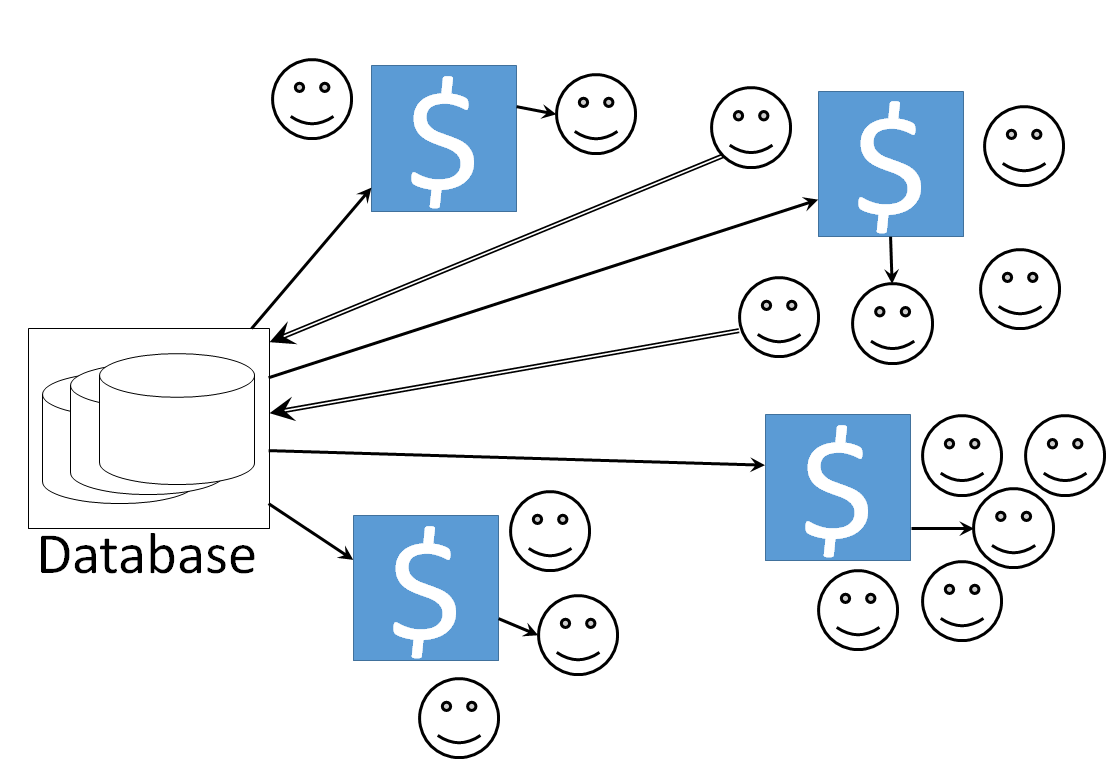} 
\caption{ 
The common two-tiered structure. 
Clients perform read-only transactions by accessing caches, which receive their values by reading from the database (solid lines). Update transactions go directly to the database (double lines). Subsequent cache invalidations can be delayed or even lost due to race conditions, leading to a potentially inconsistent view by the cache clients. 
} 
\label{fig:dbAndCache} 
\end{figure}

\paragraph{DB Transactional consistency} 
The complexity of implementing geo-scale databases with strong
guarantees initially led companies to abandon cross-object consistency
altogether and make do with weak guarantees such as per-object
atomicity or eventual consistency.  In effect, such systems do
repair any problems that arise, eventually, but the end-user is
sometimes exposed to inconsistency.
For some applications this is acceptable, and the approach
has been surprisingly successful.  In today's cloud, relaxed consistency is
something of a credo.

But forgoing transactional consistency can result in undesired
behavior of a service.  Consider a buyer at an online site who looks
for a toy train with its matching tracks just as the vendor is
adding them to the database.  The client may see only the train in
stock but not the tracks because the product insertion transaction would often be
broken into two or more atomic but independent sub-transactions.
In a social network, an inconsistency with unexpected results can
occur if a user $x$'s record says it belongs to a certain group,
but that group's record does not include $x$.
Web albums maintain picture data and access control lists (ACLs)
and it is important that ACL and album updates are consistent (the
classical example involves removing one's boss from the album ACL
and then adding unflattering pictures).

As a result, we see more and more systems that are basically transactional in design and structure,
but incorporate workarounds to tolerate weak consistency.  When inconsistency does arise, the platform's 
utility is reduced. This is our target: our goal is to reduce the frequency of such events as much as possible.
There has been a wave of recent innovations within the backend,
offering scalable object stores that can efficiently support transactions through snapshot isolation and even full atomicity~\cite{corbett2013spanner, balakrishnan2013tango, eyal2013ordering, escriva2013warp}. 
Our approach seeks to improve transaction consistency at the cache 
layer, without requiring the cache to access the backend on each read or to invalidate items that may be undergoing updates. 


\paragraph{Caches bring value but cause problems}
As noted, many of today's systems are transactional at the backend, yet
inconsistent because they use edge caching decoupled from the internal transactional mechanisms.  Even when the database itself is consistent, the vast majority
of operations are read-only transactions issued by edge clients and
are at high risk of observing inconsistent state in the cache.  The
outright loss of cache invalidations emerges as an especially
significant problem if transactional consistency is of value.
In our work, we start by accepting that any solution must preserve
performance properties of the existing caching tier.  In particular, we
need to maintain the shielding role of the cache: the cache hit
ratio should be high.  Additionally, a read-only cache access should
complete with a single client-to-cache round-trip on cache hits.
This rules out coherent cache solutions such as~\cite{ports2010txCache}.
On the other hand, our approach gains leverage by adopting the view that inconsistency
is undesirable but not unacceptable.  Our goal is thus to {\em reduce the rate of user-visible inconsistencies.}


    \section{Architecture} 

Since the cache is required to respond immediately to the client on hits, and the database-cache channel is  asynchronous, we decided to employ
a transactional consistency that is weaker than the full ACID model. In our approach, read-only transactions and update transactions that access the same cache
 are guaranteed an atomic execution, but read-only transactions that access different caches may observe different orderings for  independent update transactions. 

\begin{definition}[Cache serializability] \label{cacheSerializability} 
For every execution $\sigma$, every partial execution that includes all update transactions in $\sigma$ and all read-only transactions that go through a single cache server, is serializable. 
\end{definition} 

Our solution seeks to come as close as possible to cache serializability, subject to constraints imposed by the bounded
size of the edge caches and the use of asynchronous communication with the DB. 
We start with an observation: in many scenarios, objects
form \emph{clusters} with strong locality properties. Transactions are 
likely to access objects that are, in some sense, close to each other. 
For retailers
this might involve related products, for social networks the set
of friends, for geographical services physical proximity, and for
web albums the ACL objects and the pictures assigned to them.  
Moreover, in some cases applications explicitly cluster their 
data accesses to benefit from improved parallelism~\cite{XWB13}. 
The resulting
transactions access objects from a single cluster, although there will
also be some frequency of 
transactions that access unrelated objects in different clusters.

Our solution requires minor changes to the database object
representation format, imposing a small and constant memory overhead
(that is, independent of the database
size and the transaction rate).  This overhead involves tracking and caching what we refer
to as {\em dependency lists.} These are bounded-length lists of object identifiers and
the associated version numbers, each representing some recently
updated objects upon which the cached object
depends.  

A bounded-sized list can omit dependency information required to detect inconsistencies, hence it is
important to use a bound large enough to capture most of the relevant dependencies.
At present we lack an automated way to do this: we require the
developer to tune the length so that the frequency of errors is
reduced to an acceptable level, reasoning about the trade-off (size versus accuracy) in a manner we discuss further below.  
Intuitively, dependency lists should be roughly the same size as the size of the 
workload's clusters. 

Our extensions offer a transactional interface to the cache
in addition to the standard read/write API.
In many cases, our algorithm detects and fixes inconsistent read-only
transactions at the cache with constant complexity.  It does so by
either aborting the transaction (which can then be retried), or
invalidating a cached object which can then force a read from the
database (similar to handling cache misses).  When the dependency lists 
fail to document a necessary dependency, an
application might be exposed to stale values. 

Because we have in mind client-side applications that are unlikely to 
validate against the back-end, for many of our intended uses some level of undetected inconsistency can slip past.
However, because the developer would often be able to tune the mechanism, during
steady-state operation of large applications, the rate of unnoticed inconsistencies could be extremely low. 

With clustered workloads we will demonstrate that it is sufficient to
store a small set of dependencies to detect most inconsistencies.  We
also investigate workloads where the clustered access pattern is
less strongly evident; here, our approach is less effective even with longer
dependency list lengths.  Thus our solution is not a panacea, but, for applications
matched to our assumptions, can be highly effective.

        \subsection{Database} 

We assume that the database tags each object with a version number specific to the transaction that
most recently updated it, and that there is a total ordering on version numbers. The version of a
transaction is chosen to be larger than the versions of all objects accessed by the transaction.
The database stores for each object $o$ a list of $k$
dependencies $(d^o_1, v^o_1), (d^o_2, v^o_2), \dots (d^o_k, v^o_k)$.
This is a list of identifiers and versions of other objects that the current version
of $o$ depends on. A read-only transaction that sees the current version of $o$ must not
see object $d_i$ with version smaller than $v_i$. 

When a transaction $t$ with version $v_t$ touches objects $o_1$ and
$o_2$, it updates both their versions and their dependency lists.
Subsequent accesses to object $o_1$ must see object $o_2$ with a
version not smaller than $v_t$. Moreover, it inherits all of the~$l$
dependencies of $o_2$ (where $l$ is the length of $o_2$'s dependency
list).  So the dependency list of $o_1$ becomes
\begin{multline*} 
(d^{o_1}_1, v^{o_1}_1), (d^{o_1}_2, v^{o_1}_2), \dots (d^{o_1}_k, v^{o_1}_k), \\
(o_2, v_t), 
(d^{o_2}_2, v^{o_2}_2), (d^{o_2}_3, v^{o_2}_3), \dots (d^{o_2}_l, v^{o_2}_l) \,\,\, . 
\end{multline*} 

\newcommand{\key}{\ensuremath{ \textit{key} }}
\newcommand{\ver}{\ensuremath{ \textit{ver} }}
\newcommand{\depList}{\ensuremath{ \textit{depList} }}
\newcommand{\readSet}{\ensuremath{ \textit{readSet} }}
\newcommand{\writeSet}{\ensuremath{ \textit{writeSet} }}

When a transaction is committed, this update is done for all objects in the transaction at once. Given a read set \readSet, and a write set \writeSet, containing tuples comprised of the keys accessed, their versions and their dependency lists, the database aggregates them to a single full dependency list as follows
\[
\textit{full-dep-list} 
\gets 
\hspace{-0.1in}
\bigcup_{\substack{(\key, \ver, \depList) \in \\ \readSet \cup \writeSet}} 
\hspace{-0.1in}
    \{(key,ver)\} 
    \cup 
    \depList \,\,\, .
\]

This list is pruned to match the target size using LRU, and stored with each write-set object. 
A list entry can be discarded if the same entry's object appears
in another entry with a larger version.  Nevertheless, were their lengths not bounded,
dependency lists could quickly grow to include all objects in the database.

        \subsection{Cache}\label{sec:cache}

In our scheme, the cache interacts with the database in essentially
the same manner as for a consistency-unaware cache,  performing
single-entry reads (no locks, no transactions) and receiving invalidations
as the database updates objects. Unlike consistency-unaware caches,
the caches read from the database not only the object's value, but
also its version and the dependency list. 

To its clients, the extended  cache exports a transactional read-only
interface. Client read requests are extended with a
transaction identifier and a last-op flag 
\[
\texttt{read(txnID, key, lastOp)} \,\,\, . \hspace{\linewidth}
\]
The transaction identifier \texttt{txnID} allows the cache to recognize 
reads belonging to the same transaction. 
The cache responds with either the value of the requested object, or 
with an abort  if it detects an inconsistency between this read and 
any of the previous reads with the same transaction ID. 
We do not guarantee that inconsistencies will be detected. The 
\texttt{lastOp} allows the cache to garbage-collect its transaction 
record after responding to the last read operation of the transaction. 
The cache will treat subsequent accesses with the same transaction ID 
as new transactions. 

To implement this interface, the cache maintains a record of each
transaction with its read values, their versions, and their dependency
lists. On a read of~$\key_\textit{curr}$, the cache first obtains the requested entry from memory (cache hit), or database (cache miss). 
The entry includes the value, version~$\ver_\textit{curr}$ and dependency list $\depList_\textit{curr}$. 
The cache checks the currently read object against each of the previously read objects. 
If a previously read version $v'$ is older than the version $v$ expected by the current read's dependencies 
\begin{multline} \label{eqn:prevOld}
\exists k, v, v': v > v' \wedge (k, v) \in \depList_\textit{curr} \wedge \\
(k, v') \in readSet \cup \writeSet \,\,\, , 
\end{multline} 
or the version of the current read~$v_\textit{curr}$ is older than the version~$v$ expected by the dependencies of a previous read 
\begin{equation} \label{eqn:curOld}
\exists v: v > v_\textit{curr} \wedge (\key_\textit{curr}, v) \in readSet \cup \writeSet \,\,\, , 
\end{equation}
an inconsistency is detected. Otherwise the cache returns the
read value to the client. 

Upon detecting an inconsistency, the cache can take one of three
paths:

\begin{enumerate}
\item \texttt{ABORT}: abort the current transaction.  Compared to
the other approaches, this has the benefit of affecting only the
running transaction and limiting collateral damage.
\item \texttt{EVICT}:
abort the current transaction \emph{and} evict the violating (too-old) object
from the cache. This approach guesses that 
future transactions are likely to abort because of this object. 
\item \texttt{RETRY}: check which is the violating object. If it is the currently accessed object (Equation~\ref{eqn:curOld}), treat this access as a miss and
respond to it with a value read from the database. If the violating
object was returned to the user as the result of a read earlier in the transaction (Equation~\ref{eqn:prevOld}),
evict the stale object and abort the transaction (as in \texttt{EVICT}). 
\end{enumerate}

        \subsection{Consistency} 

With unbounded resources, \tCache\ detects all inconsistencies, as stated in the following theorem. 

\begin{restatable}{thm}{consistencyThm} \label{thm:consistency} 
\tCache\ with unbounded cache size and unbounded dependency lists implements cache-serializability. 
\end{restatable}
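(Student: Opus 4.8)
The plan is to fix an arbitrary cache server $C$ and an arbitrary execution $\sigma$, and to exhibit a single serial order equivalent to the partial execution $H$ consisting of all update transactions of $\sigma$ together with the read-only transactions that ran through $C$. As Definition~\ref{cacheSerializability} demands, I will build this order around the version numbers. Since every transaction's version exceeds the versions of all objects it accesses, whenever one update transaction reads an object written by another their versions are ordered accordingly, so the total order on versions is consistent with the write-read dependencies among update transactions; I therefore take the version order as the serial order of the updates (the backend is assumed to commit them serializably). It then remains only to insert each committed read-only transaction of $C$ into this order.

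The core of the argument is a lemma stating that, with unbounded lists, a read-only transaction commits at $C$ if and only if the set of pairs $(\key,\ver)$ it returns forms a \emph{consistent snapshot}: there is a prefix of the update serial order in which each object read carries exactly the version the transaction observed. First I would establish a dependency-completeness invariant by induction over the sequence of committed updates: after each commit, the list stored with an object version contains a pair $(o',v')$ whenever reading that object forces $o'$ to version at least $v'$ in any serialization, the inductive step being precisely the database's aggregation-and-inheritance rule (the union over \readSet\ and \writeSet\ of $\{(\key,\ver)\}\cup\depList$). Granting this invariant, the two checks the cache performs---Equation~\ref{eqn:prevOld} against a stale earlier read and Equation~\ref{eqn:curOld} against a stale current read---fire on exactly those pairs of reads that cannot coexist in any prefix of the serial order, so a transaction surviving both checks reads a consistent snapshot.

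With the lemma in hand the conclusion is short. I place each committed read-only transaction immediately after the last update transaction in its snapshot prefix; each of its reads then returns the value written by the most recent preceding update, as in a serial execution. Because read-only transactions neither write nor conflict with one another, their relative placement is unconstrained and adds no ordering obligations, so the resulting total order is a valid serialization of $H$. Since $C$ was arbitrary, the same construction serializes the sub-execution of every cache server, which is exactly the statement of the theorem.

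I expect the dependency-completeness invariant to be the main obstacle. The delicate point is to show that the transitive accumulation of dependency lists across transaction boundaries captures \emph{every} ordering constraint needed to exclude a non-serializable read set---in particular, that the cache's two local, pairwise checks are strong enough to rule out a global anti-dependency cycle threading through several update and read-only transactions. Making this precise requires characterizing exactly which pairs survive in $\depList(o,v)$ after unbounded propagation and verifying that this set coincides with the constraints that witness non-serializability; the bounded-list case only weakens this coverage and hence yields the best-effort guarantee discussed elsewhere in the paper.
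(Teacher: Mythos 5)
There is a genuine gap, and it is located exactly where your construction tries to be simpler than it can be: you fix a single serial order of the updates up front (the version order) and claim that every committed read-only transaction reads a \emph{prefix snapshot} of that fixed order. That claim is false. Counterexample: let $t_1$ write only $x$ and $t_2$ write only $y$, with $t_1$'s version smaller than $t_2$'s, so your fixed order is $t_1,t_2$. Suppose the cache still holds $x$ at its pre-$t_1$ version (a lost invalidation) but holds $y$ at its post-$t_2$ version, and a read-only transaction $T$ reads both. Since $t_1$ and $t_2$ touch disjoint objects, no dependency list ever relates $x$ and $y$, so neither check (Equation~\ref{eqn:prevOld} nor Equation~\ref{eqn:curOld}) fires and $T$ commits --- correctly so, because $T$ is serializable as $t_2, T, t_1$. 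But no prefix of your fixed order $t_1,t_2$ yields ($x$ old, $y$ new), so your placement step (``immediately after the last update transaction in its snapshot prefix'') has nothing to attach $T$ to. This is precisely why the paper does \emph{not} fix an order: its permutation routine reorders independent update transactions separately for each read-only transaction, and its main proof then has to sequence these placements carefully --- ordering the read-only transactions by the acyclic ``read a newer version'' relation (acyclic only because a single cache replaces values monotonically) and arguing that each later permutation touches only the suffix after the transactions already placed. Your remark that read-only transactions ``add no ordering obligations'' conceals exactly this difficulty: trivial insertion works only if all committed read sets are prefix snapshots of one common order, and producing such an order is the whole content of the theorem.

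Two further problems compound this. First, the version order need not even be a valid serialization of the update transactions: you verify only write-read dependencies, but anti-dependencies are not respected by version numbers --- if $t_1$ reads $o$ (and also touches some high-versioned object) while $t_2$ later overwrites $o$, then $t_1$ must precede $t_2$ in any equivalent serial order, yet $t_1$'s version can exceed $t_2$'s. The paper avoids this by starting from an \emph{arbitrary} serialization $\pi$ of the updates, which exists by the backend's assumed serializability, and never privileging version order. Second, you explicitly leave the dependency-completeness invariant unproven and call it the main obstacle; but that invariant is the heart of the matter, not a technical detail to be deferred. The paper's counterpart is its lemma that the set of transactions dependent on the too-old write (Set~1) and the set of transactions the correct write depends on (Set~2) are independent, proved by contradiction: any such dependency would appear in the unbounded dependency lists and would have triggered an abort of $T$. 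As it stands, your proposal is a plan whose central lemma is false in the form stated and whose supporting invariant is acknowledged but not established.
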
 

The proof (deferred to Appendix~\ref{app:theory}) is by constructing a serialization of the transactions in the database and in one cache, based on the fact that the transactions in the database are serializable by definition. 

The implications of Theorem~\ref{thm:consistency} will be seen in Section~\ref{sec:eval:temporal}. 
\tCache\ converges to perfect detection when stable clusters are as large as its dependency lists. In such a scenario, the dependency lists are large enough to describe all relevant dependencies.

    \section{Experimental Setup} 

To evaluate the effectiveness of our scheme, we implemented a
prototype.  To study the properties of the cache, we only need a
single column (shard) of the system, namely a single cache backed
by a single database server.

\begin{figure}[!t] 
\centering 
\includegraphics[width=0.8\linewidth]{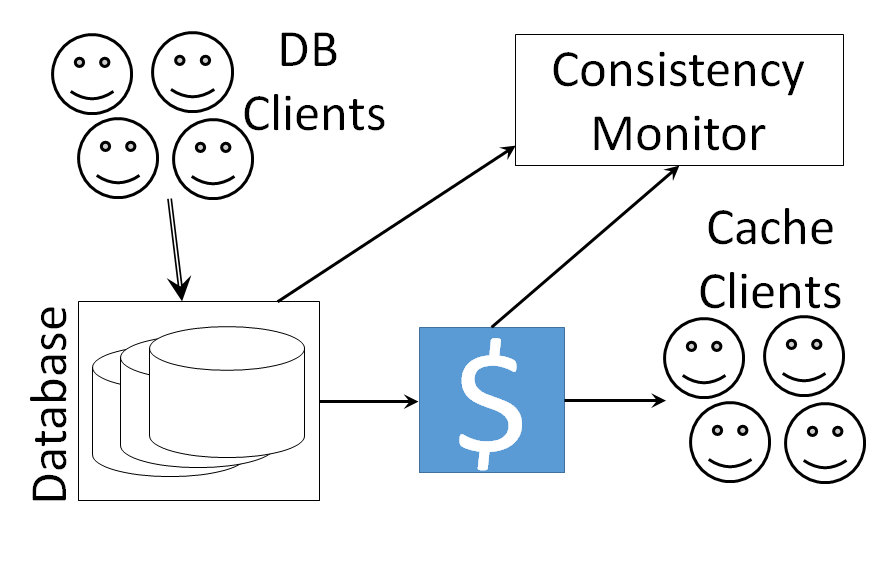} 
\caption{ 
Experimental setup. Update clients access database, which sends invalidations to the cache. 
Read-only clients access cache. Consistency monitor (experiment-only element) receives all transactions and rigorously detects inconsistencies for statistics. 
} 
\negspace
\label{fig:expScheme} 
\end{figure}

Figure~\ref{fig:expScheme} illustrates the structure of our
experimental setup.  A single database implements a transactional
key-value store with two-phase commit.
A set of cache clients perform read-only transactions through a
single cache server.  The cache serves the requests from its local
storage if possible, or reads from the database otherwise.

On startup, the cache registers an upcall that can be used by the database to report invalidations; 
after each update transaction, the database asynchronously sends
invalidations to the cache for all objects that were modified.  A ratio of
$20\%$ of the invalidations, chosen uniformly at random, are dropped
by the experiment; this is extreme and would only be seen in the real world under
conditions of overload or when the system configuration is changed.

Both the database and the cache report all completed transactions
to a consistency monitor, created in order to gather statistics for
our evaluation. This server collects both committed and aborted
transactions and it maintains the full dependency graph.  It performs
full serialization graph testing~\cite{bernstein1987concurrency}
and calculates the rate of inconsistent transactions that committed
and the rate of consistent transactions that were unnecessarily
aborted.

Our prototype does not address the issue of cache eviction when
running out of memory.  In our experiments, all objects in the
workload fit in the cache, and eviction is only done if there is
a direct reason, as explained below.  Had we modeled them, evictions would 
reduce the cache hit rate, but could not cause new inconsistencies.

We evaluate the effectiveness of our transactional cache
using various workloads and varying the size of the dependency lists
maintained by the cache and the database. 
For the cases considered, short dependency lists suffice (up to~5 versions per object). 
An open question for further study is whether there are workloads that might require limited but larger values. 
Note that dependencies arise from the topology of the object graph, and 
not from the size of the transactions' read and write sets. 
        
As a baseline for
comparison, we also implemented a timeout-based approach: it reduces
the probability of inconsistency by limiting the life span of cache
entries. We compare this method against our transactional cache by
measuring its effectiveness with a varying time-to-live (TTL) for
cache entries. 

In all runs, both read and update transactions access~5 objects per
transaction. Update clients access the database at a rate of~100
transactions per second, and read-only clients access the cache at
a rate of~500 transactions per second. 

Our experiment satisfies all read-only transactions from the cache, 
while passing all update transactions directly to the backend database.  
Each cache server is unaware of the other servers~--- it has its 
own clients and communicates directly with the backend database. 
The percentage of read-only transactions can be arbitrarily 
high or low in this situation: with more 
caches, we can push the percentage up. 
Our simulation focuses on just 
a single cache---it would behave the same had there been 
many cache servers.

    \section{Evaluation} \label{sec:evaluation} 

\tCache\ can be used with any transactional backend and any transactional
workload.  Performance for read-only transactions will be similar to
non-transactional cache access: the underlying database is only accessed
on cache misses.  However, inconsistencies may be observed.

First, we will use synthetic workloads so we can evaluate how much
inconsistency can be observed as a function of the amount of clustering
in the workload.  This also allows us to look at the dynamic behavior of the system,
when the amount of clustering and the clustering formation change over time. 

Next, we will look at workloads based on Amazon's product co-purchasing and
Orkut's social network to see how much inconsistency \tCache\ can detect
as a function of dependency list length, and compare this with a TTL-based
approach.  We are also interested in overhead, particularly the additional
load on the backend database that could form if the the rate of cache misses 
increases.

Section~\ref{sec:cache}
presented three strategies for responding to inconsistency detection.
For both the synthetic and realistic workloads, we compare
the efficacy of the three strategies.

        \subsection{Synthetic Workloads} \label{sec:evaluation-artificial} 

Synthetic workloads allow us to understand the efficacy of \tCache\ as
a function of clustering.
For the experiments described here, we configured \tCache\ with a maximum
of~5 elements per dependency list. 


Section~\ref{sec:artificialWorkload} describes synthetic workload generation.
Section~\ref{sec:eval:clustering} measures how many inconsistencies
we can detect as a function of clustering and Section~\ref{sec:eval:temporal}
considers clustering changes over time.
Section~\ref{sec:art-detect-prevent} compares the efficacy of various
approaches to dealing with detected inconsistencies.

            \subsubsection{Synthetic Workload Generation} \label{sec:artificialWorkload} 

Our basic synthetic workload is constructed as follows. 
We use~2000 objects numbered~0 through~1999. The objects are divided into clusters of size~5: $0-4, 5-9, 10-14, \dots$, and there are two types of workloads. In the first, clustering is perfect and each transaction chooses a single cluster and chooses~5 times with repetitions within this cluster to establish its access set. In the second type of workloads access is not fully contained within each cluster. When a transaction starts, it chooses a cluster uniformly at random, and then picks~5 objects as follows. Each object is chosen using a bounded Pareto distribution starting at the head of its cluster $i$ (a product of~5). If the Pareto variable plus the offset results in a number outside the range (i.e., larger than 1999), the count wraps back to~0 through~$i-1$. 

            \subsubsection{Inconsistency Detection as a Function of $\alpha$} \label{sec:eval:clustering} 

We start by exploring the importance of the cluster structure by
varying the $\alpha$ parameter of the Pareto distribution. We
vary the Pareto $\alpha$ parameter from~1/32 to~4.  In this experiment
we are only interested in detection, so we choose the ABORT strategy. 

Figure~\ref{fig:paretoAlpha} shows the ratio of inconsistencies
detected by \tCache\ compared to the total number of potential
inconsistencies. 
At $\alpha=1/32$, the distribution is almost uniform
across the object set, and the inconsistency detection ratio is
low~--- the dependency lists are too small to hold all relevant
information. At the other extreme, when~$\alpha=4$, the distribution
is so spiked that almost all accesses of a transaction are within
a cluster, allowing for perfect inconsistency detection.  We
note that the rate of detected inconsistencies is so high at this
point that much of the load goes to the backend database and saturates
it, reducing the overall throughput.

\begin{figure}[!t]
\centering
\includegraphics[width=0.8\linewidth]{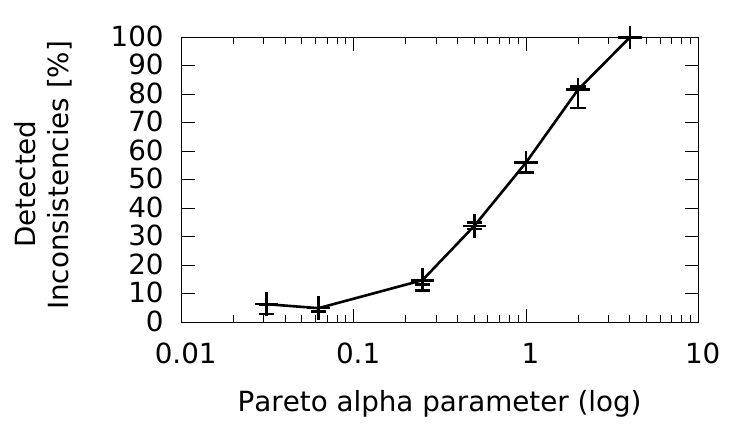}
\caption{Ratio of inconsistencies as a function of $\alpha$.}
\label{fig:paretoAlpha}
\end{figure}

            \subsubsection{Convergence} \label{sec:eval:temporal} 

So far we have considered behavior with static clusters, that
is, over the entire run of each experiment accesses are confined
to the same (approximate) clusters.  Arguably, in a real system,
clusters change slowly, and so if \tCache\ converges to maintain the
correct dependency lists as clusters change, our setup 
serves as a valid quasi-static analysis. 

In this section, we investigate the convergence of \tCache\ when
clusters change over time.  Since the dependency lists of the objects are
updated using LRU, the dependency list of an object $o$ tends
to include those objects that are frequently accessed together with~$o$.  
Dependencies in a new cluster automatically push out dependencies
that are now outside the cluster.

            \subsubsection*{Cluster formation} 

To observe convergence, we perform an experiment where accesses
suddenly become clustered.  Initially accesses are uniformly at
random from the entire set (i.e., no clustering whatsoever), then
at a single moment they become perfectly clustered into 
clusters of size~5. Transactions 
are aborted on detecting an inconsistency.  We use a transaction rate 
of approximately~\num{500}
per second. The database includes \num{1000} objects. 

\begin{figure}[!t] 
\centering
\includegraphics[width=0.8\linewidth]{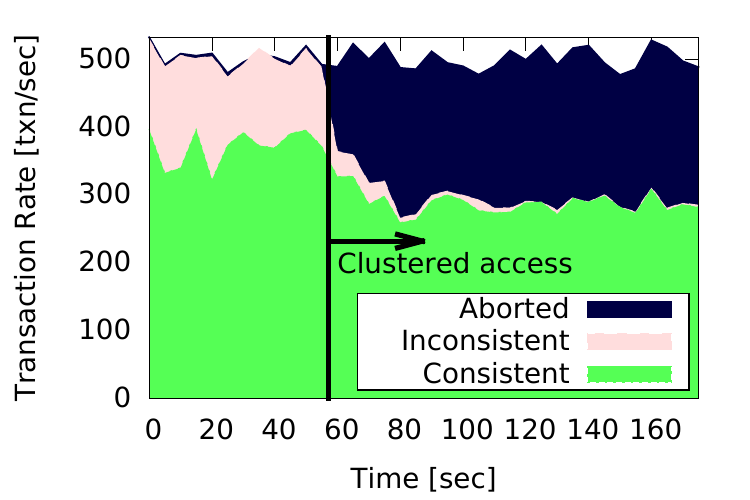} 
\caption{
Convergence of \tCache.  Before time $t = 58s$ accesses are uniformly
at random.  Afterward, accesses are clustered.
} 
\label{fig:convergence} 
\end{figure}

Figure~\ref{fig:convergence} shows the percentage of transactions
that commit and are consistent (at the bottom), the percentage of
transactions that commit but are inconsistent (in the middle), and
the percentage of transactions that abort (at the top).  Before
$t=58s$ access is unclustered, and as a result the dependency lists
are useless; only few inconsistencies are detected, that is, about~26\%
of the transactions that commit have witnessed inconsistent data.
At $t=58s$, accesses become perfectly clustered.  As desired, we
see fast improvement of inconsistency detection.
The inconsistency rate drops as the abort rate rises~--- this is desired as well.
The overall rate of consistent committed transactions drops because
the probability of conflicts in the clustered scenario is higher. 

            \subsubsection*{Drifting Clusters} 

To illustrate more realistic behavior, we use clustered accesses that slowly drift. Transactions are perfectly clustered, as in the previous experiment, but every~3 minutes the cluster structure shifts by~1 ($0-4, 5-9, 10-14 \rightarrow 1-4, 5-10, 11-15$, and wrapping back to zero after 1999).
Figure~\ref{fig:drift} shows the results.
After each shift, the objects' dependency lists are outdated. This leads to a sudden increased inconsistency rate that converges back to zero, until this convergence is interrupted by the next shift. 

\begin{figure}[!t] 
\centering 
\includegraphics[width=0.8\linewidth]{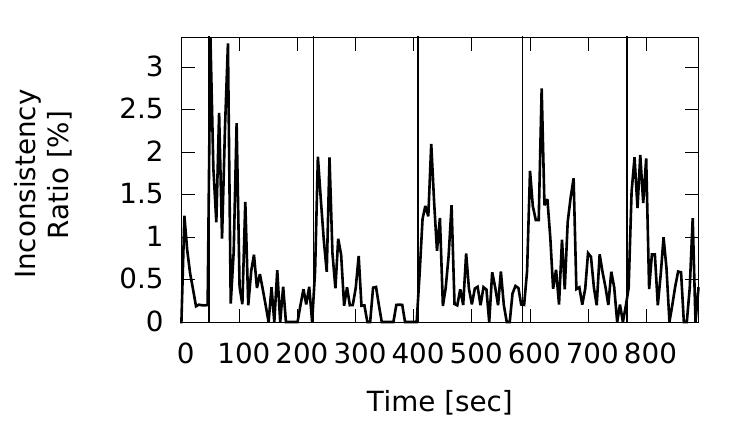} 
\caption{ 
Perfectly clustered synthetic workload where the clusters shift by~1 every~3 minutes, marked by vertical lines.
} 
\label{fig:drift} 
\end{figure} 

            \subsubsection{Detection vs. Prevention}\label{sec:art-detect-prevent}

Section~\ref{sec:cache} presented three possible strategies for the cache to deal with inconsistency detection: (1) aborting the transaction (ABORT), (2) aborting and evicting value (EVICT), and (3) read-through when possible as in cache miss, abort otherwise (RETRY).  We will now compare their efficacies.

We use the approximate clusters workload with~\num{2000} objects, a window size of~5, a Pareto~$\alpha$ parameter of~1.0, and the maximum dependency list size is set to~5. 

Figure~\ref{fig:onInconsistency-drifting} illustrates the results. For each strategy, the lower portion of the graph is the ratio of committed transactions that are consistent, the middle portion is committed transactions that are inconsistent, and the top portion is aborted transactions. 

The abort strategy provides a significant improvement over a normal, consistency-unaware cache, as the strategy detects and aborts 
over~$55\%$ of all inconsistent transactions that would have been committed. 
But the other strategies make further improvements. 
EVICT reduces uncommittable transactions to $28\%$ of its value with ABORT.
This indicates that violating (too-old) cache entries are likely to be repeat offenders: they are too old for objects that are likely to be accessed together with them in future transactions, and so it is better to evict them. 
RETRY reduces uncommittable transactions further to about $23\%$ of its value with ABORT. 

\begin{figure}[!t] 
\centering 
\includegraphics[width=0.8\linewidth]{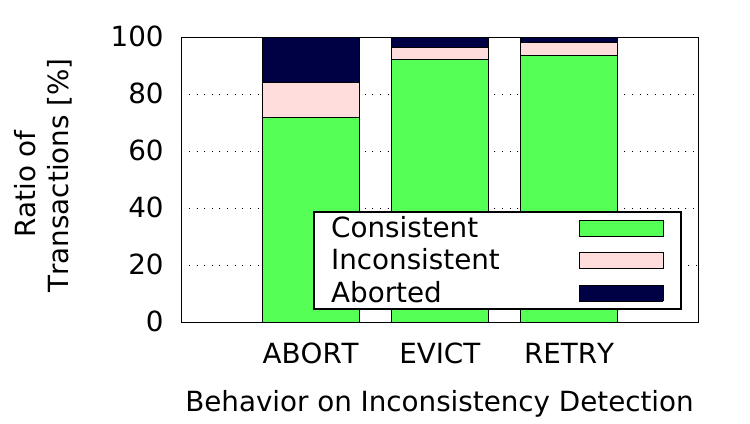} 
\caption{ 
The efficacy of \tCache\ as a function of the strategy taken for
handling detected inconsistencies. ABORT detects and aborts transactions that glimpsed inconsistent data.
EVICT aborts, but also evicts the inconsistent cached data, and RETRY performs a read-through if it resolves an inconsistent read. 
In each case we break down the transactions as consistent, undetected inconsistencies, and aborted. 
} 
\label{fig:onInconsistency-drifting} 
\end{figure}

        \subsection{Realistic Workloads} \label{sec:evaluation-realistic} 

We now evaluate the efficacy of \tCache\ with workloads based on two sampled topologies from the online retailer Amazon and the social network Orkut. 
Section~\ref{sec:realisticWorkloadGeneration} describes how we generated these workloads.
Section~\ref{sec:realResults} measures the efficacy of \tCache\ on these workloads as a function of maximum dependency list size, and compares this to a strategy based on TTLs.
Section~\ref{sec:onInconsistencyRealistic} compares the efficacy of the three strategies of dealing with detected inconsistencies.

            \subsubsection{Workload Generation} \label{sec:realisticWorkloadGeneration} 

We generated two workloads based on real data:
\begin{enumerate}
\item Amazon:
We started from a snapshot of Amazon's product co-purchasing graph
taken early~2003~\cite{leskovec2007dynamics}.  Each
product sold by the online retailer is a node and each pair of
products purchased in a single user session is an edge.
The original graph contains more than~\num{260000} nodes.
\item Orkut:
For the second, we used a snapshot of the friendship relations graph in the Orkut social network, taken late~2006~\cite{mislove2007measurement}. 
In this graph, each user is a node and each pair of users with a friend relationship is an edge. The original graph contains more than~\num{3000000} nodes.
\end{enumerate}

Because the sampled topologies are large and we only need to simulate
a single ``column'' of the system for our purposes~--- one database
server and one cache server~--- we down-sample both graphs to~\num{1000}
nodes.  We use a technique based on random walks that maintains
important properties of the original graph~\cite{leskovec2006sampling},
specifically clustering which is central to our experiment.
We start by choosing a node uniformly and random and start a random walk 
from that location. In every step, with probability~$15\%$, the walk 
reverts back to the first node and start again.  This is repeated until 
the target number of nodes have been visited.
Figure~\ref{fig:amazonGraph}(a) and (b) show a further down-sampling
to~500 nodes to provide some perception of the topologies. The graphs 
are visibly clustered, the Amazon topology more so than the Orkut one, 
yet well-connected.

Treating nodes of the graphs as database objects, transactions are
likely to access objects that are topologically close to one another.
For the online retailer, it is likely that objects bought together
are also viewed and updated together (e.g., viewing and buying a toy
train and matching rails).  For the social network, it is likely
that data of befriended users are viewed and updated together
(e.g., tagging a person in a picture, commenting on a post by a
friend's friend, or viewing one's neighborhood).

Therefore, we generate a transactional workload that accesses
products that are topologically close.
Again, we use random walks. Each transaction starts by picking a 
node uniformly at random and takes~5 steps of a random walk.  
The nodes visited by the random walk are the objects the transaction
accesses. Update transactions first read all objects from the
database, and then update all objects at the database. Read
transactions read the objects directly from the cache.

\begin{figure*}[p!] 
\centering

\subfloat[Product Affinity (Amazon)]{ 
\includegraphics[width=0.3\linewidth]{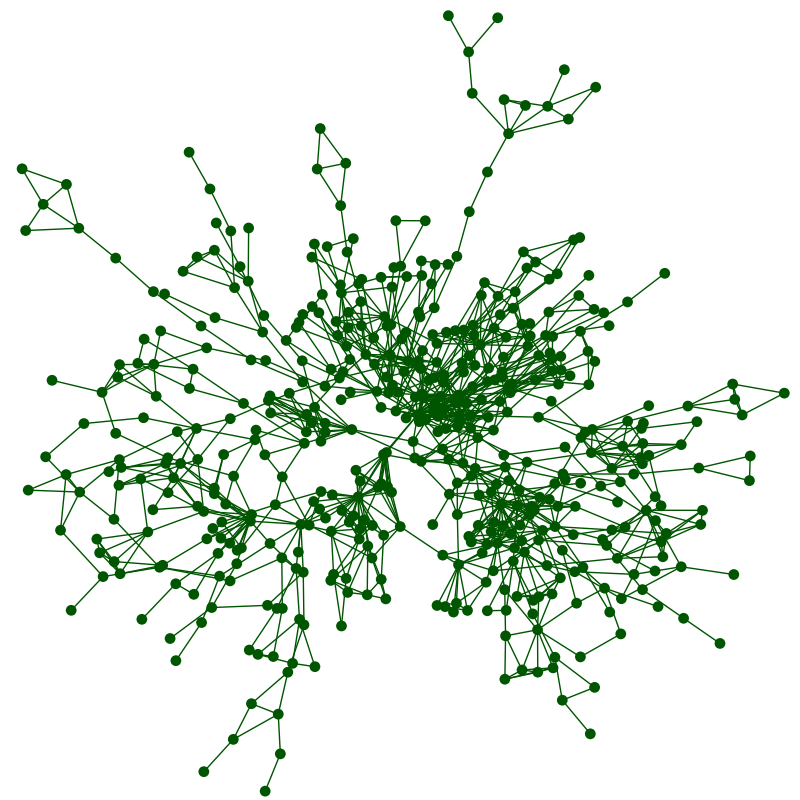} 
\label{fig:amazon:topology} 
}
\hfil
\subfloat[Social Network (Orkut)]{
\includegraphics[width=0.3\linewidth]{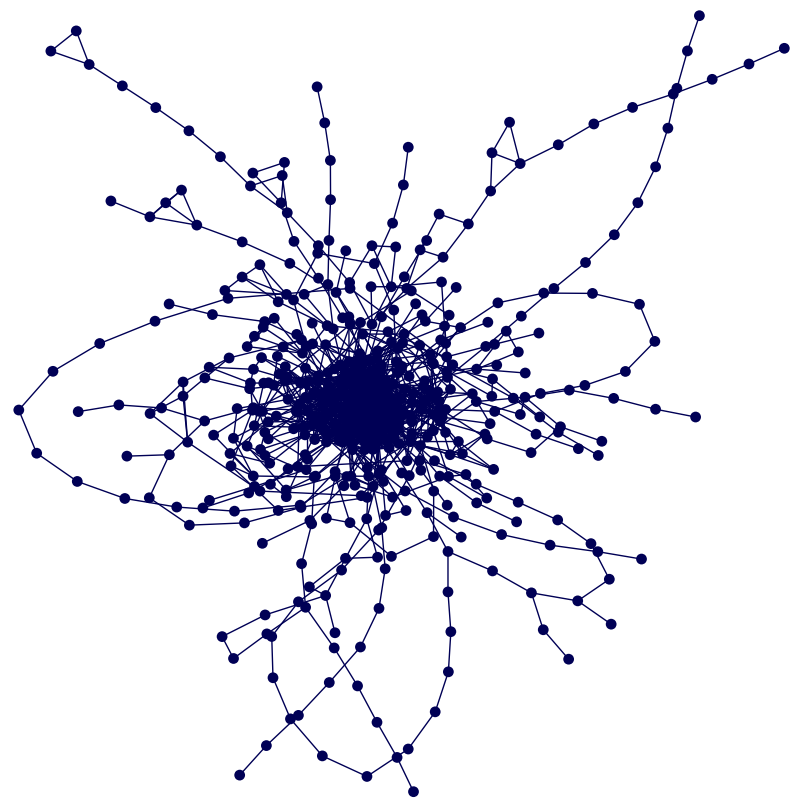} 
\label{fig:orkut:topology} 
}

\subfloat[Transactional Cache]{ 
\includegraphics[width=0.48\linewidth]{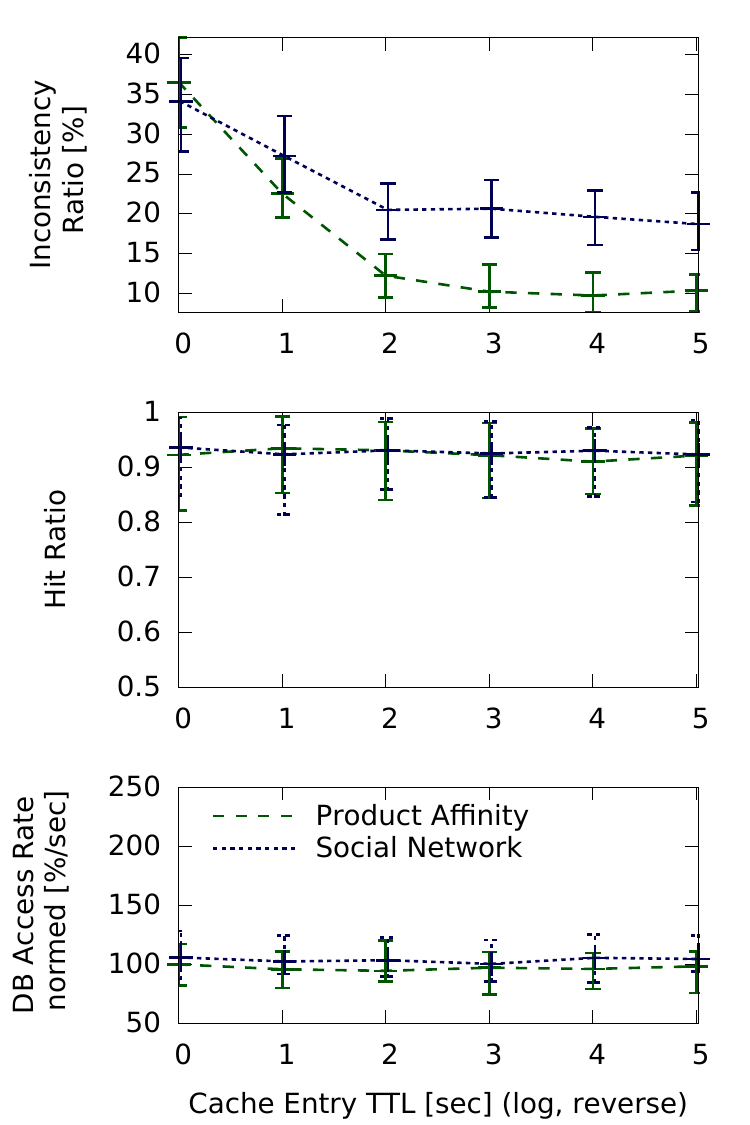} 
\label{fig:efficiency:txnCache} 
}
\hfil
\subfloat[Limited Cache Entry TTL]{ 
\includegraphics[width=0.48\linewidth]{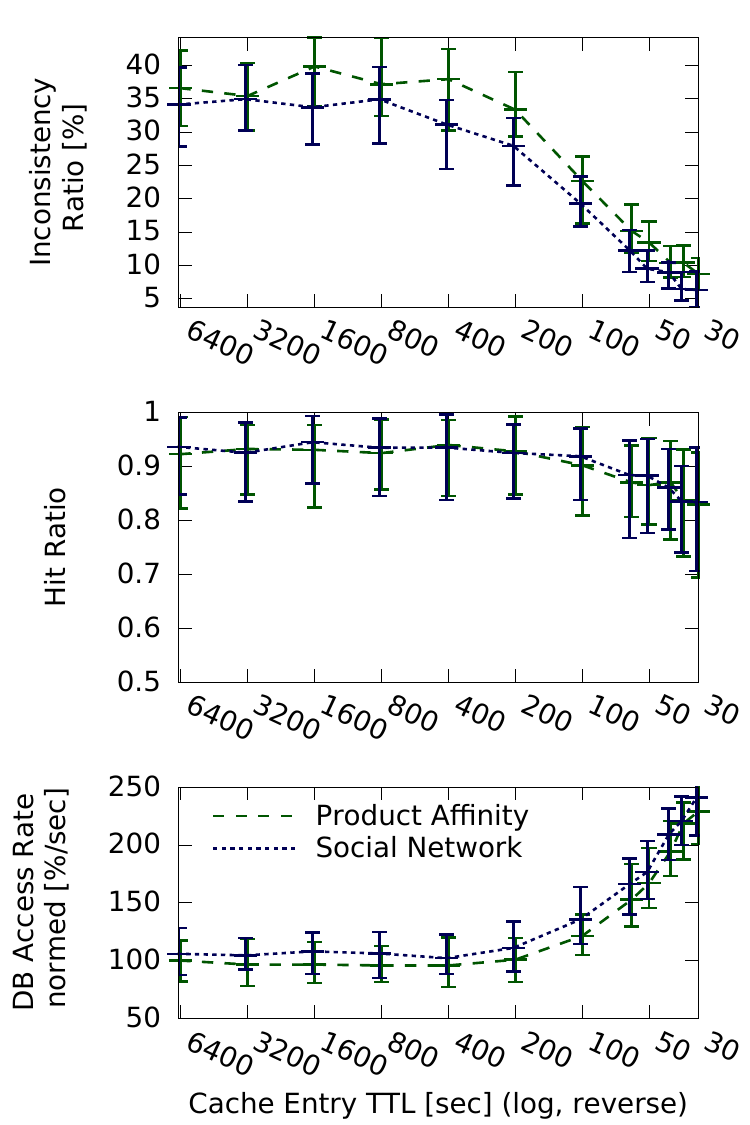} 
\label{fig:efficiency:TTL} 
} 

\caption[.]{\protect 
Experiments with workloads based on a web retailer product affinity topology and a social network topology illustrated in~\subref{fig:amazon:topology} and~\subref{fig:orkut:topology}. Transactional cache~\subref{fig:efficiency:txnCache} compared against the alternative of reducing cache entry time-to-live~\subref{fig:efficiency:TTL}. 
Data points are medians and error bars bound the 10 and 90 percentiles. 
} 
\label{fig:amazonGraph} 
\end{figure*} 

            \subsubsection{Efficacy and Overhead} \label{sec:realResults}

In this section we evaluate {\tCache} using the workloads described above.
We found that the abort rate is negligible in all runs.
Efficacy is therefore defined to be the ratio of inconsistent transactions out of all commits. 

The overhead of the system is twofold. First, dependency list maintenance implies storage and bandwidth overhead at both the database and the cache, as well~as compute overhead for dependency list merging at the server and consistency checks at the cache. However,~the storage required is only for object IDs and versions, not content, and both updates and checks are $O(1)$ in the number of objects in the system and $O(k^2)$ in the size of the dependency lists, which is limited to~5 in our experiments. 

The second and potentially more significant overhead is the effect on cache hit ratio due to evictions and hence the database load. Since cache load is significantly larger than database load (2 orders of magnitude for Facebook~\cite{bronson2013tao}), even a minor deterioration in hit ratio can yield a prohibitive load on the backend database. Figure~\ref{fig:efficiency:txnCache} shows the experiment results. Each data point is the result of a single run. 

We vary the dependency list size and for each value run the experiment for the two workloads and measure the average values of these metrics. 
{\tCache} is able to reduce inconsistencies significantly. 
For the retailer workload, a single dependency reduces inconsistencies to $56\%$ of their original value, 
two dependencies reduce inconsistencies to $11\%$ of their original value, 
and three to less than $7\%$. 
For the social network workload, with~3 dependencies fewer than $7\%$ of the inconsistencies remain. 

In both workloads there is no visible effect on cache hit ratio, and hence no increased access rate at the database. The reduction in inconsistency ratio is significantly better for the retailer workload. Its topology has a more clustered structure, and so the dependency lists hold more relevant information. 

Next we compared our technique with a simple approach in which we limited the life span (Time To Live, TTL) of cache entries. Here inconsistencies are not detected, but their probability of being witnessed is reduced by having the cache evict entries after a certain period even if the database did not indicate they are invalid. 

We run a set of experiments similar to the \tCache\ ones, varying cache entry TTL to evaluate the efficacy of this method in reducing inconsistencies and the corresponding overhead. 
Compared to \tCache, Limiting TTL has detrimental effects on cache hit ratio, quickly increasing the database workload. By increasing database access rate to more than twice its original load we only observe a reduction of inconsistencies of about $10\%$. 
This is more than twice the rate of inconsistencies achieved by \tCache\ for the retailer workload and only slightly better than the rate of inconsistencies achieved by \tCache\ for the social network workload; and with twice the additional load on the database.

\begin{figure}[!t]
\centering
\includegraphics[width=0.8\linewidth]{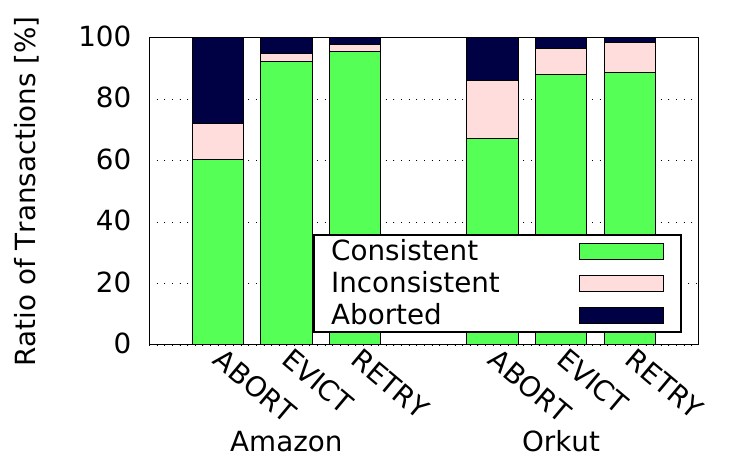} 
\caption{
The efficacy of \tCache\ as a function of the inconsistency handling
strategy for realistic workloads.
}
\label{fig:onInconsistencyReal}
\end{figure} 

            \subsubsection{Detection vs. Prevention} \label{sec:onInconsistencyRealistic} 

Figure~\ref{fig:onInconsistencyReal} compares the efficacy of the ABORT, EVICT and RETRY policies with the Amazon and Orkut workloads.
In these experiments we use dependency lists of length~3.  Just as with the synthetic workload, evicting conflicting transactions is an effective way of invalidating stale objects that might cause problems for future transactions.


The effects are more pronounced for the well-clustered Amazon workload. 
With the Amazon workload, ABORT is able to detect $70\%$ of the inconsistent transactions, whereas with the less-clustered Orkut workload it only detects $43\%$. 
In both cases EVICT reduces uncommittable transactions considerably, relative to their value with ABORT~--- $20\%$ with the Amazon workload and $36\%$ with Orkut. In the Amazon workload,
RETRY further reduces this value to $11\%$ of its value with ABORT.

    \section{Related Work} 

\paragraph{Scalable Consistent Databases} 


Recent years have seen a surge of progress in the development of scalable object stores that support transactions. 
Some systems such as~\cite{LFKA11, SPAL11, LPC12, bailis2014ramp, Xie14Salt} export novel consistency definitions that allow for effective optimizations. 
Several recent systems implement full fledged atomicity while preserving the system's scalability with a wide variety of workloads.
Google's Spanner utilizes accurate clock synchronization. 
Tango~\cite{balakrishnan2013tango} by Balakrishnan et al.\ is constructed on top of the scalable Corfu~\cite{balakrishnan2012corfu} log. 
Eyal et al.~\cite{eyal2013ordering} utilize a large set of independent logs. 
Escriva et al.~\cite{escriva2013warp} use DHT-based locking. 
Zhang et al.~\cite{zhang2013transaction} use lock chains and assume transactions are known in advance. 
These methods all scale well and in many cases allow databases to accept loads similar to those handled by non-transactional databases. 
Nevertheless, they are not expected to disrupt the prevailing two-tier structure; \emph{caches remain invaluable}. 

Note that we are addressing the problem of read-only incoherent caches that respond to queries without access to the backend database. Previous work on coherent caches, e.g.~\cite{franklin1997transactional, carey1994shore,adya1995loosely}, supports transactions using locks or communication with the database on each transaction. These techniques are not applicable in our scenario. 

\paragraph{Consistent Caching} 
Much work has been done on creating consistent caches for web
servers~\cite{CID99,YBS99,ZY01,AO06,OracleAS}, distributed file systems~\cite{Kent86,VEA96}, Key-Value Stores~\cite{Memcached,bronson2013tao,nishtala2013scaling} and higher level objects~\cite{GBHC00,BCF04}.
Such systems consider only one object at a time, and only individual read
and write operations, as they do not support a transactional interface.
There are few if any multi-object or multi-operation
consistency considerations.
These systems generally try to avoid staleness through techniques such as
Time-To-Live (TTL), invalidation broadcasts, and leases.
\emph{Our work considers multi-object transactional consistency of cache access}.

\paragraph{Transactional Caching} 
Early work on scalable database caching mostly ignored transactional
consistency~\cite{LKM02}.
Since then, work has been done on creating consistent caches for databases.
TxCache~\cite{ports2010txCache} extends a centralized database with support
for caches that provide snapshot isolation semantics,
albeit the snapshots seen may be stale.
To improve the commit rate for read-only transactions, they use
\emph{multiversioning}, where the cache holds several versions of an object
and enables the cache to choose a version that allows a transaction to commit. This technique could also be used with our solution.
Perez-Sorrosal et al.~\cite{PPJK07,perez2011elastic} also support snapshot isolation, but
can be used with any backend database,
including ones that are sharded and/or replicated.
JBossCache~\cite{JBossCache} provides a transactionally consistent cache
for the JBoss middleware.
Both JBossCache and \cite{LR03} support transactions on cached
Enterprise JavaBeans.
\cite{BFG06} allows update transactions to read stale data out of caches
and provide bounds on how much staleness is allowed.
These techniques require fast communication between the cache and the
database for good performance.  In contrast, \emph{in our work
caches are asynchronously updated (or invalidated)}, which is how
caches currently work in large multi-regional clouds.

    \section{Future Directions}

The dependency list sizes for all objects in \tCache\ are currently all of the same maximum length. This may not be optimal. For example, if the workload accesses objects in clusters of different sizes, objects of larger clusters call for longer dependency lists. Once appropriate real workloads are available, it may be possible to improve performance by dynamically changing per-object dependency list sizes, balancing between objects to maintain the same overall space overhead.   Another option is to explore an approach in which each type of object would have its own dependency list bound; this could work well if a system has multiple classes of objects, all clustered but with different associated clustering properties.

At present, \tCache\ is semantics-agnostic and treats all objects and object relations as equal, using an LRU policy  to trim the list of dependencies. However, there may be cases in which the application could explicitly inform the cache of relevant object dependencies, and those could then be treated as more important and retained, while other less important ones are managed by some other policy such as LRU. For example, in a web album the set of pictures and their ACL is an important dependency whereas occasional tagging operations that relate pictures to users may be less important. It may be straightforward to extend the cache API to allow the application to specify such dependencies and to modify \tCache\ to respect them. 

    \section{Conclusion} 

Existing large-scale computing frameworks make heavy use of edge caches to reduce client latency, but this form of caching has not been available for transactional applications.  We believe this is one reason that transactions are generally not considered to be a viable option in extremely large systems.

We defined cache-serializability, a variant of serializability that
is suitable for incoherent caches, which cannot communicate with the 
backend database on every read access.  We then presented \tCache, an
architecture for controlling transaction consistency with caches.
The system extends the edge cache by allowing it to offer a
transactional interface.  We believe that \tCache\ is the first
transaction-aware caching architecture in which caches are updated
asynchronously.
In particular, a lookup request only requires a round-trip to the
database in case there is a cache miss~--- there is no additional
traffic and delays to ensure cache coherence.

{\tCache} associates dependency information with cached database
objects, while leaving the interaction between the backend systems
and the cache otherwise unchanged.  This information includes version
identifiers and bounded-length dependency lists.  With even a modest
amount of additional information, we show that inconsistency can
be greatly reduced or even completely eliminated in some cases.
With unbounded resources, we proved that \tCache's algorithm implements cache-serializability. 

\tCache\ is intended for clustered workloads, which are common in social networks, product relationships, mobile applications with spatial locality, and many other
cloud computing applications. 
Our experiments demonstrate \tCache\ to be extremely effective in realistic workloads based on datasets from Amazon and Orkut. Using dependency lists of size~3, {\tCache} detected $43-70\%$ of the inconsistencies, and was also able to increase consistent transaction rate by $33-58\%$ with only nominal overhead on the database.
Our experiments with synthetic workloads showed that \tCache's efficacy depends on the clustering level of the workload. \tCache\ also responds well to change, and in particular, 
we showed that the system rapidly adapts to workloads where data clustering evolves over time.


\bstctlcite{ieeeBSTCTLMaxNames} 
\bibliographystyle{IEEEtran} 
\bibliography{txnCache} 

\appendix

\newcommand{\sigmaUpdate}{\ensuremath{ \sigma_\textit{update} }}

    \section{Consistency} \label{app:theory} 

We now prove Theorem~\ref{thm:consistency}. 

\consistencyThm*

Since we assume that the transactional DB is serializable, the operations in an execution of update transactions \sigmaUpdate\ can be serialized as some serial execution~$\pi$. The next claim trivially follows from the definition of the database dependency list specification: 

\begin{claim} \label{clm:dependencies}
If $\pi$ is a serialization of the update transactions of an execution \sigmaUpdate, then, at every step in $\pi$, the version dependencies of every object match those stored in its dependency list. 
\end{claim}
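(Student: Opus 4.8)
The plan is to prove the claim by induction on the number of transactions in the serial order $\pi$, maintaining as an invariant that after each prefix of $\pi$ the dependency list the database stores for every object $o$ is exactly the set of \emph{forced} version dependencies of $o$ --- the pairs $(o', v')$ such that any read-only transaction observing the current version of $o$ must, to be serializable with $\pi$, observe $o'$ at a version no smaller than $v'$. Because the dependency-list update is a deterministic function of the read and write sets, and $\pi$ is equivalent to the real execution, it suffices to replay the aggregation rule along $\pi$. For the base case, before any transaction runs every object carries its initial version and an empty list, and no nontrivial dependency is forced, so the invariant holds vacuously.

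For the inductive step I would assume the invariant after the first $n$ transactions and consider the $(n{+}1)$-st transaction $t$, with version $v_t$, read set $\readSet$ and write set $\writeSet$. Objects untouched by $t$ keep their version and list, and since $t$ did not alter them their forced dependencies are unchanged, so the invariant persists by the induction hypothesis. For an object $o \in \writeSet$, the database replaces its list by $\textit{full-dep-list}$, and I would establish two inclusions. Completeness: a read-only transaction $T$ that sees $o$ at its new version $v_t$ must be ordered after $t$ in any serialization, hence must see every object of $\readSet \cup \writeSet$ at a version at least the one it held when $t$ ran (namely $v_t$ for write-set objects), and transitively must respect all dependencies those objects carried; by the induction hypothesis the latter equal their stored lists, so every forced pair already appears in the union defining $\textit{full-dep-list}$. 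Soundness: every pair in $\textit{full-dep-list}$ is genuinely forced, being either the current version of an object $t$ itself accessed or a dependency inherited from such an object, which was forced by the induction hypothesis.

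When the same object occurs with several recorded versions, the larger-version constraint subsumes the smaller, matching the database's rule of discarding dominated entries, so retaining the per-object maximum loses nothing and the stored list coincides with the true version dependencies.

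The delicate point is the transitivity argument in the completeness direction: I must check that ``forced to be ordered after $t$'' composes correctly with the inherited constraints, so that the maximum version recorded per object is exactly the one any consistent read is forced to respect. This is also precisely where the unbounded hypothesis is used --- with bounded lists the LRU pruning could drop a genuinely forced pair and break the match, whereas with unbounded lists no information is lost and the invariant is maintained exactly.
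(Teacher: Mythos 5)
Your overall strategy---induction along $\pi$, replaying the deterministic aggregation rule prefix by prefix---is a sensible way to supply details here, and indeed the paper itself offers no argument at all: it asserts the claim ``trivially follows from the definition of the database dependency list specification.'' Your soundness direction (every stored pair is a genuine constraint) is also correct. The problem is the invariant you chose to maintain: you define the ``version dependencies'' of $o$ as the serializability-\emph{forced} pairs and claim the stored list equals that set exactly. That equality is false, because serialization order can also be forced by anti-dependencies (write-after-read conflicts), which the aggregation rule never records. Concretely: let all objects start at version $v_0$; transaction $W$ reads $x$ and writes $o'$ (new version $v_W$); later, transaction $t$ writes $x$ and $o$ (new version $v_t$), never touching $o'$. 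In every valid serialization $W$ must precede $t$ (otherwise $W$'s read of $x$ would return $t$'s write rather than $v_0$), so any read-only transaction that sees $o$ at version $v_t$ is forced after $t$, hence after $W$, hence must see $o'$ at a version at least $v_W$. Yet the list stored with $o$ after $t$ commits is built only from $t$'s read and write sets ($x$ and $o$) and their inherited lists, and contains no entry for $o'$ whatsoever. So $(o', v_W)$ is forced but unlisted, and the completeness half of your inductive step (``every forced pair already appears in the union defining the full dependency list'') is precisely the step that fails; no amount of care with the transitivity argument you flag as delicate will rescue it, since the missing pair is not reachable through $t$'s accesses at all.

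The repair is to read ``version dependencies'' the way the paper's specification implicitly defines them: the transitive closure of the read-set/write-set relation computed by the aggregation rule itself, i.e., data-flow dependencies only. Under that essentially definitional reading, your induction goes through unchanged---untouched objects keep their lists, written objects receive exactly the union of the accessed objects' current entries and their inherited entries, and unboundedness guarantees nothing is pruned---which is evidently what the paper means by ``trivially follows.'' But you should state explicitly that the claim then guarantees only that every \emph{data-flow} constraint is recorded, not every constraint that serializability imposes; the anti-dependency example above is invisible to the dependency lists, and any downstream use of the claim (such as the paper's permutation routine, which swaps sets of transactions after checking only data-flow independence) has to contend with that weaker meaning.
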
 

To prove Theorem~\ref{thm:consistency}, we first describe a routine for placing a read-only transaction from a cache server in a serialization of a subset of $\sigma$, to form a serialization of both the update transaction and the read-only transaction. 

        \subsection{Permutation routine} 

Let $\sigma$ be an execution of the \tCache\ system, and denote by \sigmaUpdate\ the projection of $\sigma$ on the set of database update transactions. 
Let~$T$ be a read-only cache transaction that reads objects $o_1, o_2, \dots, o_n$ with versions $v_1,v_2, \dots, v_n$, respectively. 
Take any serialization $\pi$ of \sigmaUpdate\ (one exists according to Claim~\ref{clm:dependencies}) and consider the first time when all the objects the transaction reads are at a version at least as large as the versions that~$T$ reads. 
At this time at least one object read by $T$, the last written according to $\pi$, has the correct version, but others might not. Assume without loss of generality that the last version written is~$v_n$ of object $o_n$ at step $t$ of $\pi$. 
Denote by $t'$ the latest time at which a wrong version (not the one read by $T$) is written, and assume WLOG it is version $v_{n-1} + k$ of object $o_{n-1}$ (rather than the desired version $v_{n-1}$) for some $k \ge 1$. 

We now describe a single iteration of the routine. 
Consider the transactions between $t'$ and $t$ (inclusive). Divide these transactions into three sets:

\begin{description} 
\item[Set 1] Transactions dependent on the transaction at~$t'$ (including~$t'$). 
\item[Set 2] Transactions on which $t$ is dependent (including~$t$). 
\item[Set 3] Transactions that do not belong to either group. 
\end{description} 

The following Lemma states that there is no dependency among objects in sets~1 and~2, and hence there is no intersection between the sets. 

\begin{lemma} 
Sets~1 and~2 are independent. 
\end{lemma}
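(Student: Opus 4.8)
The plan is to argue by contradiction, showing that any dependency linking the two sets would force the update transaction written at step $t$ (call it $T_t$, which writes $o_n$ to its target version $v_n$) to depend transitively on the transaction written at step $t'$ (call it $T_{t'}$, which writes $o_{n-1}$ to the too-large version $v_{n-1}+k$). First I would observe that it suffices to exclude a directed dependency path $T_{t'} \to^{*} T_t$: a cross-dependency running out of a Set~1 transaction into a Set~2 transaction, as well as any transaction lying in both sets, produces exactly such a path (the Set~1 transaction is reachable from $T_{t'}$ and the Set~2 transaction reaches $T_t$). Thus ruling out $T_{t'} \to^{*} T_t$ simultaneously gives the independence claimed in the Lemma and, since a common transaction would itself sit on such a path, the disjointness ``hence there is no intersection.''

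Next I would convert ``$T_t$ depends on $T_{t'}$'' into a statement about stored dependency lists. Using the inheritance rule from the database description~--- a transaction that reads an object inherits that object's version entry together with its entire dependency list~--- I would show by induction along the chain $T_{t'} \to^{*} T_t$ that the list produced by $T_t$ (equivalently, the list stored with version $v_n$ of $o_n$) must contain an entry $(o_{n-1}, v')$ with $v' \ge v_{n-1}+k$. The point to verify carefully here is monotonicity: inheritance only carries version requirements forward, and with unbounded dependency lists nothing is dropped for space, while the pruning rule replaces an entry only by one carrying a \emph{larger} version for the same object. Hence the recorded requirement on $o_{n-1}$ can never fall below $v_{n-1}+k$, and in particular stays strictly above $v_{n-1}$. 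By Claim~\ref{clm:dependencies}, the list stored at step $t$ faithfully reflects this true dependency.

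Finally I would extract the contradiction from the fact that $T$ was not aborted. Since $T$ is a committed read-only transaction that we are inserting into the serialization, and \tCache\ is run with unbounded resources, all of $T$'s reads passed the cache's consistency checks. But $T$ read $o_n$ at version $v_n$ and $o_{n-1}$ at version $v_{n-1}$, while we have just shown that the dependency list stored with version $v_n$ of $o_n$ demands $o_{n-1}$ at some version $v' > v_{n-1}$; this is precisely the condition flagged by Equation~\ref{eqn:prevOld}, so \tCache\ would have detected the inconsistency and aborted $T$. This contradiction shows that no chain $T_{t'} \to^{*} T_t$ exists, which proves both the independence and the disjointness of Sets~1 and~2.

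I expect the main obstacle to be the inductive monotonicity step: making precise that transitive dependence propagates the entry $(o_{n-1}, v_{n-1}+k)$ all the way into $T_t$'s list without the pruning mechanism ever lowering the recorded version, even though intermediate transactions may re-touch $o_{n-1}$ or aggregate many lists at commit time. A secondary point to pin down is the bookkeeping that $t'$ is the latest wrong-version write and $t$ the last needed write, so that $o_{n-1},o_n$ and the versions $v_{n-1},v_n$ are exactly those read by $T$, letting the test of Equation~\ref{eqn:prevOld} fire against $T$'s own read set.
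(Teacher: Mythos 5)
Your proof is correct and follows essentially the same route as the paper's: assume the sets are dependent, conclude that version $v_n$ of $o_n$ transitively depends on version $v_{n-1}+k$ of $o_{n-1}$, observe that unbounded dependency lists faithfully record this dependency, and derive a contradiction because \tCache\ would then have detected transaction $T$'s stale read of $o_{n-1}$ and aborted it~--- your added detail (the path argument and the monotonicity induction on list inheritance) just makes explicit what the paper compresses into ``this dependency is reflected in their \tCache\ dependency lists, because they are unbounded.'' One small imprecision: whether the violation fires as Equation~\ref{eqn:prevOld} or as Equation~\ref{eqn:curOld} depends on the order in which $T$ read $o_{n-1}$ and $o_n$, but either check catches it, so the contradiction stands.
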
 

\begin{proof} 
If they were dependent, then version $v_n$ of object $o_n$ depends on version $v_{n-1} + k$ of object $o_{n-1}$, and this dependency is reflected in their \tCache\ dependency lists, because they are unbounded. However, transaction~$T$ has read version $v_{n-1}$ of object $o_{n-1}$, which is older than $v_{n-1} + k$. The read of the stale version~$v_{n-1}$ of $o_{n-1}$ would have been detected by \tCache\ and the transaction would have been aborted. Therefore the assumption is wrong, and the sets are indeed independent. 
\end{proof} 

Set~3, perhaps an empty set, is unrelated to sets~1 and~2 by definition. We therefore switch sets~1 and~2, and place set~3 right after them, maintaining a serialization of \sigmaUpdate. 

For example, consider the following serialization: 
($X_i$ denotes a transaction $X$ in set $i$):

\begin{center} 
\noindent
\begin{tabularx}{0.8\linewidth}{|*{7}{Y|}} 
\hline
\cellcolor{white} $A_1$ &
\cellcolor{white} $B_3$ &
\cellcolor{white} $C_1$ &
\cellcolor{white} $D_1$ &
\cellcolor{white} $E_3$ &
\cellcolor{white} $F_2$ &
\cellcolor{white} $G_2$ 
\\
\hline
\end{tabularx}
\end{center} 
After the permutation, we obtain:
\begin{center} 
\noindent
\begin{tabularx}{0.8\linewidth}{|*{7}{Y|}} 
\hline
\cellcolor{white} $A_1$ &
\cellcolor{white} $C_1$ &
\cellcolor{white} $D_1$ &
\cellcolor{white} $F_2$ &
\cellcolor{white} $G_2$ & 
\cellcolor{white} $B_3$ &
\cellcolor{white} $E_3$ 
\\
\hline
\end{tabularx}
\end{center} 

Performing this permutation is one iteration of the routine. 
We repeat this iteration forming a series of permutations. Each permutation is a serialization of~\sigmaUpdate, and each permutes a range of the transactions with respect to the result of the previous iteration. 
In each iteration the right end of the permuted range is smaller than the left end of range permuted by the previous iteration, as one or more of the objects is closer to the value read by~$T$. Eventually we therefore reach a permutation where at the chosen time all read objects are at their correct versions. We place $T$ there to obtain the desired serialization of the update transactions and $T$. 

        \subsection{\tCache\ Consistency} 

We proceed to prove Theorem~\ref{thm:consistency}. 

\begin{proof} 
Let $\sigma$ be an execution of the \tCache\ system, and denote by \sigmaUpdate\ the projection of $\sigma$ on the set of database update transactions. Denote by $T_1, T_2, \dots, T_m$ a set of read-only transactions performed through a single \tCache\ server. 

If the read sets of two transactions include the same object $o$, we say the one that read a larger version of $o$ depends on the other. All transactions access the same cache, and the cache is unbounded.  Therefore, values are only replaced by newer versions, so it is easy to see that there are no cycles such that two transactions depend on one another. 
The dependency graph therefore describes a partial order of the read-only transactions, and we choose an arbitrary total ordering that respects this partial order. Assume WLOG the order is $T_1, T_2, \dots, T_m$. 

We take an initial arbitrary serialization~$\pi_0$ of~$\sigma$ and permute it according to the route above to place~$T_1$, the first read-only transaction. The result is a permutation~$\pi_1'$ that includes~$T_1$. Then, we take all transactions that precede~$T_1$ in~$\pi_1'$ although $T_1$ does not depend on them, and place them after~$T_1$. We call this permutation~$\pi_1$. 

Next we place~$T_2$ by permuting~$\pi_1$. If $T_2$ can be placed immediately after $T_1$, we place it there to form~$\pi_2$. 
If $T_2$ is independent of $T_1$ then all its preceding transactions (according to the dependency graph) are unrelated to $T_1$ and are therefore located after it. The permutations required are therefore after $T_1$'s location. Finally, if $T_2$ depends on $T_1$, all relevant update transactions are located after $T_1$ in $\pi_1$, and therefore the permutations required are all after $T_1$'s location. Since in all cases the permutations are after $T_1$'s location in $\pi_1$, they do not affect the correctness of $T_1$'s placement. We take the resulting permutation that we call $\pi_2'$, and move all transactions that neither $T_2$ nor $T_1$ depend on to right after $T_2$. The resulting permutation is $\pi_2$. 

We repeat this process until we place all read-only transactions, forming~$\pi_m$. This is a serialization of the update transactions in $\sigma$ and all read-only transactions that accessed the same cache. We have therefore shown that in any execution of \tCache\ the update transactions can be serialized with read-only transactions that accessed a single cache, which means that \tCache\ implements cache serializability. 
\end{proof}


\end{document}